\newtheorem{thm}{Theorem}[section]
\theoremstyle{definition}
\theoremstyle{remark}
\numberwithin{equation}{section}
\def\bc{{\mathbb C}}
\def\bm{{\mathbb M}}
\def\bn{{\mathbb N}}
\def\br{{\mathbb R}}
\def\a{\alpha}
\def\b{\beta}
\def\g{\gamma}  
  \def\D{\Delta}
\def\l{\lambda} 
\def\m{\mu}
\def\s{\sigma} 
\def\t{\tau}
\def\tr{\mathop{\rm Tr}}
\def\id{{\bf 1}\!\!{\rm I}}
\def\D{\Delta}
\def\wb{{\mathbf{w}}}
\def\a{\alpha}
\begin{document}
\title[bistochastic Kadison-Schwartz operators]
{On description of bistochastic Kadison-Schwarz operators on $\bm_2(\mathbb{C})$}

\author{Farrukh Mukhamedov}
\address{Farrukh Mukhamedov\\
 Department of Computational \& Theoretical Sciences\\
Faculty of Science, International Islamic University Malaysia\\
P.O. Box, 141, 25710, Kuantan\\
Pahang, Malaysia} \email{{\tt far75m@yandex.ru}, {\tt
farrukh\_m@iiu.edu.my}}

\author{Abduaziz Abduganiev}
\address{Abduaziz Abduganiev\\
 Department of Computational \& Theoretical Sciences\\
Faculty of Science, International Islamic University Malaysia\\
P.O. Box, 141, 25710, Kuantan\\
Pahang, Malaysia} \email{{\tt azizi85@yandex.ru}}
\begin{abstract}

In this paper we describe bistochastic Kadison-Schawrz operators on $M_2(\mathbb{C})$.
Such a description allows us to find
positive, but not Kadison-Schwarz operators. Moreover, by means of that characterization we construct  Kadison-Schawrz operators, which are not completely positive.

 \vskip 0.3cm \noindent {\it Mathematics Subject
Classification}: 47L07; 46L30; 47C15; 15A48; 81P68
60J99.\\
{\it Key words}: Algebra of two by two matrices, trace preserving linear mapping, Kadison-Shwartz operators
\end{abstract}

\maketitle

\section{Introduction}

It is known that the theory of quantum dynamical systems provides a convenient
mathematical description of irreversible dynamics of an open quantum system (see\cite{BR})
investigation of various properties of such dynamical systems have had a
considerable growth. In a quantum setting, the matter is more complicated than in the
classical case. Some differences between classical and quantum situations are pointed out in
\cite{NC}. This motivates an interest to study dynamics of quantum
systems (see\cite{NC}). One of the main objects of this theory is mapping
(or channel) defined on matrix algebras. One of the main constraints to such a mapping is
positivity and complete positivity.  There are many papers devoted to this problem (see
for example \cite{Choi,MM1,St,W}). In the literature the most
tractable maps, the completely positive ones, have proved to be of
great importance in the structure theory of C$^*$-algebras. However,
general positive (order-preserving) linear maps are very
intractable\cite{MM1,Ma1, Ma2}. It is therefore of interest to study
conditions stronger than positivity, but weaker than complete
positivity. Such a condition is called {\it Kadison-Schwarz (KS)
property}, i.e a map $\phi$ satisfies the KS property
if $\phi(a)^*\phi(a)\leq \phi(a^*a)$ holds for every $a$. Note that
every unital completely positive map satisfies this inequality, and
a famous result of Kadison states that any positive unital map
satisfies the inequality for self-adjoint elements $a$. But KS-operators no need to be 
completely positive. In
\cite{Rob} relations between $n$-positivity of a map $\phi$ and the
KS property of certain map is established. Some nice
properties of the Kadison-Schwarz maps were investigated in
\cite{LMM,Rob1}.

In this paper we are going to describe KS-operators which are unital, trace preserving linear mappings (i.e. bistochastic operators) defined on the algebra of 2 by 2 matrices $\textit{M}_2(\mathbb{C})$. In Section 2 we show that the set of KS-operators forms a convex set. In section 3, we characterize bistochastic KS-operators on $M_2(\bc)$.  Such a description allows us to find
positive, but not Kadison-Schwarz operators. Moreover, by means of that characterization one can construct  KS-operators, which are not completely positive. Note that trace-preserving maps arise
naturally in quantum information theory \cite{KR,K,NC,RSW}
and other situations in which one wishes to restrict attention to a quantum system
that should properly be considered a subsystem of a larger system with which it interacts.

\section{Preliminaries}

Let $A$ and $B$ be unital $C^*$-algebras with identity $\id$. Recall that a linear mapping  $\Phi: A\to B$ is called {
\begin{enumerate}
\item[(i)] {\it positive} if  $\Phi(x)\geq0$ whenever $x\geq0$;

\item[(ii)] {\it unital} if   $\Phi(\id)=\id$;

 \item[(iii)] {\it $n$-positive} if
the mapping $\Phi_n: M_n(A)\to M_n(B)$ defined by $\Phi_n(a_{ij})=(\Phi(a_{ij}))$ is positive. Here $M_n(A)$ denotes the algebra of $n \times n$ matrices with $A$-valued entries;

\item[(iv)] {\it completely positive} if it is $n$-positive for all $n\in\bn$;

\item[(v)] {\it Kadison-Schwarz operator (KS-operator)}, if one has
\begin{eqnarray}\label{ks2}
\Phi(x)^*\Phi(x)\leq \Phi(x^*x) \ \ \textrm{for all} \ \ x\in A.
\end{eqnarray}
\end{enumerate}

It is clear that any KS-operator is positive.
Note that
every 2- positive map is KS-map, and
a famous result of Kadison states that any positive unital map
satisfies the inequality \eqref{ks2} for all self-adjoint elements $x\in A$.

By $\mathcal{KS}(A,B)$ we denote the set of all KS-operators mapping from $A$ to $B$.

\begin{thm}\label{ks-s} The following assertions hold true:
\begin{enumerate}
\item[(i)] Let $\Phi,\Psi\in \mathcal{KS}(A,B)$, then for any $\l\in[0,1]$ the mapping
$\Gamma=\l \Phi+(1-\l)\Psi$ belongs to $\mathcal{KS}(A_1,A_2)$. This means $\mathcal{KS}(A,B)$ is convex;
\item[(ii)] Let $U,V$ be unitaries in $A$ and $B$, respectively, then for any $\Phi\in \mathcal{KS}(A,B)$ the mapping $\Psi_{U,V}(x)=U\Phi(VxV^*)U^*$ belongs to $\mathcal{KS}(A,B)$.
\end{enumerate}
\end{thm}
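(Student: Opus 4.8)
The plan is to treat the two parts separately; in both cases the argument reduces to an algebraic identity together with the elementary facts that a convex combination of positive elements is positive and that conjugation by a fixed element is order-preserving on self-adjoint elements.

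For part (i), fix $x\in A$ and write $\Gamma(x)=\lambda\Phi(x)+(1-\lambda)\Psi(x)$. Expanding the product gives
\begin{align*}
\Gamma(x)^*\Gamma(x) &= \lambda^2\Phi(x)^*\Phi(x)+(1-\lambda)^2\Psi(x)^*\Psi(x)\\
&\quad +\lambda(1-\lambda)\big(\Phi(x)^*\Psi(x)+\Psi(x)^*\Phi(x)\big).
\end{align*}
Since $\Phi,\Psi$ are KS-operators, $\Phi(x^*x)-\Phi(x)^*\Phi(x)\geq 0$ and $\Psi(x^*x)-\Psi(x)^*\Psi(x)\geq 0$, so their convex combination is also positive, i.e.
\[
\Gamma(x^*x)\geq \lambda\Phi(x)^*\Phi(x)+(1-\lambda)\Psi(x)^*\Psi(x).
\]
Subtracting the expansion above from the right-hand side and using $\lambda-\lambda^2=(1-\lambda)-(1-\lambda)^2=\lambda(1-\lambda)$, one finds that $\Gamma(x^*x)-\Gamma(x)^*\Gamma(x)$ dominates
\[
\lambda(1-\lambda)\big(\Phi(x)-\Psi(x)\big)^*\big(\Phi(x)-\Psi(x)\big)\geq 0,
\]
which yields \eqref{ks2} for $\Gamma$. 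Note that the unital and trace-preserving hypotheses play no role here; they will enter only in the later sections.

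For part (ii), I read the hypothesis so that $\Psi_{U,V}$ is a well-defined map $A\to B$: take $V$ a unitary in $A$ and $U$ a unitary in $B$. Fix $x\in A$ and set $y=VxV^*\in A$. Using $U^*U=\id$ we get $\Psi_{U,V}(x)^*\Psi_{U,V}(x)=U\,\Phi(y)^*\Phi(y)\,U^*$. Since $\Phi$ is a KS-operator, $\Phi(y)^*\Phi(y)\leq \Phi(y^*y)$, and $y^*y=Vx^*V^*VxV^*=V(x^*x)V^*$ because $V^*V=\id$. Conjugating this operator inequality by the unitary $U$ (which preserves the order of self-adjoint elements) gives
\[
\Psi_{U,V}(x)^*\Psi_{U,V}(x)=U\,\Phi(y)^*\Phi(y)\,U^*\leq U\,\Phi\big(V(x^*x)V^*\big)\,U^*=\Psi_{U,V}(x^*x),
\]
which is precisely \eqref{ks2} for $\Psi_{U,V}$.

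Both computations are routine; the only points requiring a little care are recognizing the cross-term combination in (i) as the perfect square $\big(\Phi(x)-\Psi(x)\big)^*\big(\Phi(x)-\Psi(x)\big)$, and in (ii) keeping the roles of $U$ and $V$ consistent with the domains and codomains so that every product is meaningful. I do not anticipate any genuine obstacle.
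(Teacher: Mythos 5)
Your proof is correct and follows essentially the same route as the paper: in (i) the same expansion and the identification of the cross terms as the perfect square $\lambda(1-\lambda)\big(\Phi(x)-\Psi(x)\big)^*\big(\Phi(x)-\Psi(x)\big)$, and in (ii) the same conjugation argument using the KS inequality at $y=VxV^*$. Your remark fixing the roles of $U$ and $V$ (so that $V\in A$, $U\in B$) is a sensible reading of a small slip in the statement and does not change the argument.
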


\begin{proof} (i). Let us show that $\Gamma_\l$ satisfies \eqref{ks2}. Let $x\in A$, then one can see that
\begin{eqnarray}\label{vp1}
\Gamma_\l(x^*x)&=&\l \Phi(x^*x)+(1-\l)\Psi(x^*x)\nonumber\\
&\geq& \l \Phi(x)^*\Phi(x)+(1-\l)\Psi(x)^*\Psi(x)
\end{eqnarray}
and
\begin{eqnarray}\label{vp2}
\Gamma_\l(x)^*\Gamma_\l(x)&=&\l^2\Phi(x)^*\Phi(x)+\l(1-\l)\Phi(x)^*\Psi(x)\nonumber\\
&+&\l(1-\l)\Psi(x)^*\Phi(x)+(1-\l)^2\Psi(x)^*\Psi(x)
\end{eqnarray}
Hence, from  \eqref{vp1} - \eqref{vp2} one gets
\begin{eqnarray*}
\Gamma_\l(x^*x)-\Gamma_\l(x)^*\Gamma_\l(x)\geq \l(1-\l)\big(\Phi(x)-\Psi(x)\big)^*(\Phi(x)-\Psi(x))\geq 0,
\end{eqnarray*}
which proves the assertion.\\

(ii) For any $x\in A$ one has
\begin{eqnarray*}
\Psi_{U,V}(x^*x)&=& U\Phi\big((VxV^*)^*VxV^*\big)U^*\\
&\geq& U\Phi(VxV^*)^*\Phi(VxV^*\big)U^*\\
&=&  U\Phi(VxV^*)^*U^*U\Phi(VxV^*\big)U^*\\
&=&\Psi_{U,V}(x)^*\Psi_{U,V}(x),
\end{eqnarray*}
this completes the proof.
\end{proof}

Let us consider the set of 2 by 2 matrices $M_2(\bc)$ over $\bc$. In the
sequel by $\id$ we mean an identity matrix. By $\tr$ we mean trace on $M_2(\bc)$. In what follows by $\tau$ we denote a normalized trace, i.e. $\t=\frac12\tr$.

A linear mapping  $\Phi: M_2(\mathbb{C})\to M_2(\mathbb{C})$ is called {\it bistochastic} if it is positive, unital and trace preserving, i.e. $\tau(\Phi(x))=\tau(x)$ for all $x\in M_2(\mathbb{C})$).
Note that this terminology for maps that are both unital and
stochastic was introduced in \cite{AHW}.

In the paper we are going to consider bistochastic KS-operators on $M_2(\bc)$. Therefore, by $\mathcal{KS}(M_2(\bc))$ we denote the set of all bistochastic KS-operators defined on $M_2(\bc)$. According to Theorem \ref{ks-s} the set $\mathcal{KS}(M_2(\bc))$ is convex.

\section{Kadison-Schwarz operators on $M_2(\bc)$}

It is known (see \cite{BR,KR}) that the identity and the Pauli matrices $\{\id,\sigma_1,\s_2,\s_3\}$ form a basis for $M_2(\mathbb{C})$, where
\begin{equation*}
\s_1=\left(
      \begin{array}{cc}
        0 & 1 \\
        1 & 0 \\
      \end{array}
    \right) \
\s_2=\left(
       \begin{array}{cc}
         0 & -i \\
         i & 0 \\
       \end{array}
     \right) \
\s_3=\left(
       \begin{array}{cc}
         1 & 0 \\
         0 & -1 \\
       \end{array}
     \right).
\end{equation*}

Every matrix $a\in\textit{M}_2(\mathbb{C})$ can be written in this basis as $a=w_0\id+\wb\cdot\s$ with $w_0\in\mathbb{C}, \wb=(w_1,w_2,w_3)\in\mathbb{C}^3,$ here by $\wb\cdot\s$ we mean the following $$\wb\cdot\s=w_1\s_1+w_2\s_2+w_3\s_3.$$
The following facts holds (see \cite{RSW}):
\begin{enumerate}
\item[(a)] \ a matrix $a\in\textit{M}_2(\mathbb{C})$ is self-adjoint if and only if $w_0$ and $\wb$ are real;
\item[(b)] \ a matrix $a\in\textit{M}_2(\mathbb{C})$ is positive if and only if $\|\wb\|\leq w_0$, where
$$\|\wb\|=\sqrt{|w_1|^2+|w_2|^2+|w_3|^2} \ ;$$
\item[(c)] \ a matrix $a\in\textit{M}_2(\mathbb{C})$ is normal if and only if $[\wb,\overline{\wb}]=[\overline{\wb},\wb]$\\
for every $\wb\in\mathbb{C}^3$, where $[\cdot,\cdot]$ stands for the cross product of vectors in $\mathbb{C}^3$.
\end{enumerate}

Every $\Phi: M_2(\mathbb{C})\rightarrow M_2(\mathbb{C})$ linear mapping can also be represented in this basis by a unique $4\times4$ matrix $\textbf{F}$. It is trace preserving if and only if
$\textbf{F}=\left(
                                                                                    \begin{array}{cc}
                                                                                      1 & 0 \\
                                                                                      t & T \\
                                                                                    \end{array}
                                                                                  \right)$
where T is a $3\times3$ matrix and $0$ and $t$ are row and column vectors respectively so that
\begin{equation}\label{ks3}
\Phi(w_0\id+\wb\cdot\s)=w_0\id+(w_0t+T\wb)\cdot\s.
\end{equation}

When $\Phi$ is also positive then it maps the subspace of self-adjoint matrices of $M_2(\mathbb{C})$ into itself, which implies that $T$ is real. A linear mapping $\Phi$ is unital  if and only if $t=0$. So, in this case we have
\begin{equation}\label{ks4}
\Phi(w_0\id+\wb\cdot\s)=w_0\id+(T\wb)\cdot\s.
\end{equation}

Hence, any bistochastic mapping $\Phi:M_2(\mathbb{C})\rightarrow M_2(\mathbb{C})$ has a form \eqref{ks4}. Now we are going to give a characterization bistochastic KS-maps.

\begin{thm}\label{ks-d}
Any bistochastic mapping $\Phi: {M}_2(\mathbb{C})\rightarrow M_2(\mathbb{C})$ is  KS-operator if and only if one has
\begin{eqnarray}\label{ks5}
&&\|T\wb\|\leq\|\wb\|, \ \ T\overline{\wb}=\overline{T\wb} \\ \label{ks6}
&&\bigg\|T[\wb,\overline{\wb}]-\big[T\wb,\overline{T\wb}\big]\bigg\|\leq\|\wb\|^2-\|T\wb\|^2
\end{eqnarray}
for all $\wb\in \bc^3$.
\end{thm}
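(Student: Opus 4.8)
The strategy is to reduce the operator inequality $\Phi(x)^*\Phi(x)\le\Phi(x^*x)$ to the explicit conditions \eqref{ks5}--\eqref{ks6} by writing everything in the Pauli basis and invoking the positivity criterion (b). Write $x=w_0\id+\wb\cdot\s$. A direct computation with the Pauli relations $\s_j\s_k=\d_{jk}\id+i\sum_\ell\eps_{jk\ell}\s_\ell$ gives
\begin{eqnarray*}
x^*x=\big(|w_0|^2+\|\wb\|^2\big)\id+\big(\overline{w_0}\wb+w_0\overline{\wb}+i[\overline{\wb},\wb]\big)\cdot\s,
\end{eqnarray*}
and similarly $\Phi(x)^*\Phi(x)$ is obtained by replacing $\wb$ with $T\wb$ (using $\Phi(x)=w_0\id+(T\wb)\cdot\s$ from \eqref{ks4}). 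Applying $\Phi$ to the first expression via \eqref{ks4} and subtracting, the identity-component of $\Phi(x^*x)-\Phi(x)^*\Phi(x)$ equals $\|\wb\|^2-\|T\wb\|^2$ (the $|w_0|^2$ terms cancel since $\Phi$ is unital), while the $\s$-component is
\begin{eqnarray*}
\overline{w_0}\,(T\wb-\overline{T\overline{\wb}})+w_0\,(T\overline{\wb}-\overline{T\wb})+i\big(T[\overline{\wb},\wb]-[\overline{T\wb},T\wb]\big).
\end{eqnarray*}

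The next step is to apply criterion (b): $\Phi(x^*x)-\Phi(x)^*\Phi(x)\ge0$ iff the norm of this $\s$-component is at most $\|\wb\|^2-\|T\wb\|^2$, and this must hold \emph{for all} $w_0\in\bc$ and all $\wb\in\bc^3$. I would first extract the conditions forced by letting $w_0\to\infty$: the coefficient of $w_0$ (and of $\overline{w_0}$) must vanish identically, which gives $T\overline{\wb}=\overline{T\wb}$ for all $\wb$ — the second half of \eqref{ks5}. Granting that, the $w_0$-terms drop out entirely and the remaining requirement is exactly $\|T[\overline{\wb},\wb]-[\overline{T\wb},T\wb]\|\le\|\wb\|^2-\|T\wb\|^2$; since $[\overline{\wb},\wb]=-[\wb,\overline{\wb}]$ and the norm is unchanged by the sign, this is \eqref{ks6}. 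In particular, taking $\wb$ real forces $\|T\wb\|\le\|\wb\|$, the first half of \eqref{ks5} (alternatively this follows from positivity of $\Phi$ applied to $\id+\wb\cdot\s$). Conversely, if \eqref{ks5}--\eqref{ks6} hold, then reversing the computation shows the $\s$-component of $\Phi(x^*x)-\Phi(x)^*\Phi(x)$ reduces to $i\big(T[\overline{\wb},\wb]-[\overline{T\wb},T\wb]\big)$ with norm bounded by the identity-component, so criterion (b) gives the KS inequality for every $x$.

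The main obstacle is the bookkeeping in the ``only if'' direction: one must be careful that the inequality in criterion (b) is required for \emph{all} $w_0$, and argue cleanly that this forces the linear-in-$w_0$ terms to vanish rather than merely be dominated — the point is that $\|A w_0+\overline{w_0}B+C\|\le D$ for all $w_0\in\bc$ with $D$ fixed is possible only if $A=B=0$, which needs the observation that $T$ is real (so $\overline{T\overline{\wb}}=T\wb$ would already be automatic were $\wb$ real, but for complex $\wb$ the two $w_0$-coefficients are genuinely $T\overline{\wb}-\overline{T\wb}$ and its conjugate). Once the reality/intertwining relation $T\overline{\wb}=\overline{T\wb}$ is in hand the rest is a clean translation through criterion (b), and the forward direction is essentially the same computation read backwards. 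I would also double-check the factor of $i$ and the orientation of the cross product against the convention $[\wb,\overline{\wb}]$ used in fact (c) so that \eqref{ks6} comes out with the stated sign.
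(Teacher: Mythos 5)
Your proposal is correct and follows essentially the same route as the paper: expand $x=w_0\id+\wb\cdot\s$ in the Pauli basis, compute $\Phi(x^*x)-\Phi(x)^*\Phi(x)$ explicitly, and translate its positivity through criterion (b); the only variation is that you force $T\overline{\wb}=\overline{T\wb}$ by letting $|w_0|\to\infty$, whereas the paper deduces it from positivity of $\Phi$ (a KS-operator is positive, hence $T$ is real). One small bookkeeping slip: the $\overline{w_0}$-coefficient of the $\s$-component of the difference is identically zero (the terms $\overline{w_0}\,T\wb$ coming from $\Phi(x^*x)$ and from $\Phi(x)^*\Phi(x)$ cancel), not $T\wb-\overline{T\overline{\wb}}$ --- this is harmless, since the relation $T\overline{\wb}=\overline{T\wb}$ already follows from the (correctly computed) $w_0$-coefficient.
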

\begin{proof}
'if' part. Let $x\in\textit{M}_2(\mathbb{C})$ be an arbitrary element, i.e. $x=w_0\id+\wb\cdot\s.$ Then $x^*=\overline{w_0}\id+\overline{\wb}\cdot\s$. Therefore
\begin{equation*}
x^*x=\big(|w_0|^2+\|\wb\|^2\big)\id+\big(w_0\overline{\wb}+\overline{w_0}\wb-i\big[\wb,\overline{\wb}\big]\big)\cdot\s
\end{equation*}
Consequently, we have
\begin{eqnarray}\label{ks7}
\Phi(x)=w_0\id+(T\wb)\cdot\s, \ \  \Phi(x^*)=\overline{w_0}\id+\big(T\overline{\wb}\big)\cdot\s \\ \label{ks8}
\Phi(x^*x)=\big(|w_0|^2+\|\wb\|^2\big)\id+\big(w_0T\overline{\wb}+\overline{w_0}T\wb-iT\big[\wb,\overline{\wb}\big]\big)\cdot\s\\ \label{ks9}
\Phi(x)^*\Phi(x)=\big(|w_0|^2+\|T\wb\|^2\big)\id+\big(w_0\overline{T\wb}+\overline{w_0}T\wb-i\big[T\wb,\overline{T\wb}\big]\big)\cdot\s
\end{eqnarray}
From \eqref{ks8}-\eqref{ks9} one gets
\begin{eqnarray*}
\Phi(x^*x)-\Phi(x)^*\Phi(x)&=&\big(|w_0|^2-\|T\wb\|^2\big)\id\\
&&+\bigg(w_0\big(T\overline{\wb}-\overline{T\wb}\big)-i\big(T\big[\wb,\overline{\wb}\big]-\big[T\wb,\overline{T\wb}\big]\big)\bigg)\cdot\s\geq0
\end{eqnarray*}
Hence, due to (b) we conclude that$\Phi$ should be positive, which means $T$ is real, therefore one gets $T\overline{\wb}=\overline{T\wb}$.
consequently, the last inequality yields
\begin{eqnarray}\label{ks10}
\big(|w_0|^2-\|T\wb\|^2\big)\id-i\big(T\big[\wb,\overline{\wb}\big]-\big[T\wb,\overline{T\wb}\big]\big)\cdot\s\geq0
\end{eqnarray}
which again with (b) implies the assertion.

'only if' part. Let \eqref{ks5}-\eqref{ks6} be satisfied. Then we have \eqref{ks10}, which with \eqref{ks6} and \eqref{ks8}-\eqref{ks9} yields
\eqref{ks2}. This completes the proof.
\end{proof}

Let $\Phi$ be a bistochastic KS-operator on $M_2(\bc)$, then it can be represented by \eqref{ks4}. Following \cite{KR} let us decompose the matrix $T$ as follows $T=RS$, here $R$ is a rotation and $S$ is a self-adjoint matrix (see \cite{KR}). Define a mapping $\Phi_S$ as follows
\begin{equation}\label{F-s}
\Phi_S(w_0\id+\wb\cdot\s)=w_0\id+(S\wb)\cdot\s.
\end{equation}
Every rotation is implemented by a unitary matrix in $M_2(\bc)$, therefore there is a unitary $U\in M_2(\bc)$ such that
\begin{equation}\label{F-sU}
\Phi(x)=U\Phi_S(x)U^*, \ \ \ x\in M_2(\bc).
\end{equation}

On the other hand, every self-adjoint operator $S$ can be
diagonalized by some unitary operator, i.e. there is a unitary $V\in M_2(\bc)$ such that
$S=VD_{\l_1,\l_2,\l_3}V^*$, where
\begin{eqnarray}\label{ks-D}
D_{\l_1,\l_2,\l_3}=\left(
    \begin{array}{ccc}
      \l_1 & 0 & 0 \\
      0 & \l_2 & 0 \\
      0 & 0 & \l_3 \\
    \end{array}
  \right),
\end{eqnarray}
where $\l_1,\l_2,\l_3\in\br$.

Consequently, the mapping $\Phi$ can be represented by
\begin{equation}\label{F-DU}
\Phi(x)=\tilde U\Phi_{D_{\l_1,\l_2,\l_3}}(x)\tilde U^*, \ \ \ x\in M_2(\bc)
\end{equation}
for some unitary $\tilde U$. Due to Theorem \ref{ks-s} the mapping $\Phi_{D_{\l_1,\l_2,\l_3}}$ is also KS-operator.
Hence, all bistochastic KS-operators can be characterized by $\Phi_{D_{\l_1,\l_2,\l_3}}$ and unitaries.
In what follows, for the sake of shortness by $\Phi_{(\lambda_1,\lambda_2,\lambda_3)}$ we denote
the mapping $\Phi_{D_{\l_1,\l_2,\l_3}}$. It is clear to observe from \eqref{ks5} that  $|\lambda_k|\leq1, k=1,2,3$.
It is easy to see that the mapping $\Phi_D\mapsto U\Phi_DU^*$ is affine, therefore, if $\Phi_D$ is an extreme point of
${\mathcal{KS}}(M_2(\bc))$ then $U\Phi_DU^*$ is an extreme point of $\mathcal{KS}(M_2(\bc)$ as well. Denote
\begin{equation}\label{l-ks}
\D=\big\{(\lambda_1,\lambda_2,\lambda_3)\in\br^3:\ \ \Phi_{(\lambda_1,\lambda_2,\lambda_3)}\in {\mathcal{KS}}(M_2(\bc))\big\}.
\end{equation}
According to Theorem \ref{ks-s} the set $\D$ is convex. Now taking into account that the mapping $(\lambda_1,\lambda_2,\lambda_3)\mapsto \Phi_{(\lambda_1,\lambda_2,\lambda_3)}$ is affine, we infer that if 
$(\lambda_1,\lambda_2,\lambda_3)$ is an extreme point of $\D$, then $\Phi_{(\lambda_1,\lambda_2,\lambda_3)}$ is also
extreme point of ${\mathcal{KS}}(M_2(\bc))$. \\

{\bf Example 1.} Let us consider a famous example of non completely positive operator defined by transposition, i.e. $\Phi(x)=x^T$, where for $x\in M_2(\bc)$ by $x^T$ we denote its transposition. This mapping can be written in terms of $\Phi_{(\lambda_1,\lambda_2,\lambda_3)}$ as follows  $\Phi=\Phi_{(1,-1,1)}$.  First observe that
by taking $\wb=(1,1,i)$ in \eqref{ks6} one finds
\begin{eqnarray*}
2\sqrt{(\lambda_1-\lambda_2\lambda_3)^2+(\lambda_2-\lambda_1\lambda_3)^2}\leq2-\lambda_1^2-\lambda_2^2+1-\lambda_3^2.
\end{eqnarray*}
Putting $\lambda_3=1$, then the last one can be written as follows
\begin{eqnarray}\label{ks11}
2\sqrt2|\lambda_1-\lambda_2|\leq2-\lambda_1^2-\lambda_2^2.
\end{eqnarray}
It is clear that at $\lambda_1=1$, $\lambda_2=-1$, the inequality \eqref{ks11} is not satisfied, hence \eqref{ks6} as well. This means that $\Phi_{(1,-1,1)}$ is positive, but not KS-map.\\

In \cite{RSW} it has been given a characterization of completely positivity of $\Phi_{(\lambda_1,\lambda_2,\lambda_3)}$. Namely, the following result holds.

\begin{thm}\label{CP-m} A map $\Phi_{(\lambda_1,\lambda_2,\lambda_3)}$ is complete positive if and only if the followings inequalities are satisfied
\begin{eqnarray}\label{ks14}
&&(\l_1+\l_2)^2\leq(1+\l_3)^2\\ \label{ks15}
&&(\l_1-\l_2)^2\leq(1-\l_3)^2\\ \label{ks16}
&&(1-(\l_1^2+\l_2^2+\l_3^2))^2\geq 4(\l_1^2\l_2^2+\l_2^2\l_3^2+\l_1^2\l_3^2-2\l_1\l_2\l_3)
\end{eqnarray}
\end{thm}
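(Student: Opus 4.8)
The plan is to test complete positivity through Choi's theorem: a linear map $\Psi:M_2(\bc)\to M_2(\bc)$ is completely positive if and only if the matrix $C_\Psi=\sum_{i,j=1}^{2}E_{ij}\o\Psi(E_{ij})\in M_2(\bc)\o M_2(\bc)$ is positive semi-definite, where $\{E_{ij}\}_{i,j=1}^{2}$ are the standard matrix units of $M_2(\bc)$. First I would write the matrix units in the Pauli basis, $E_{11}=\tfrac12(\id+\s_3)$, $E_{22}=\tfrac12(\id-\s_3)$, $E_{12}=\tfrac12(\s_1+i\s_2)$, $E_{21}=\tfrac12(\s_1-i\s_2)$, and apply the representation \eqref{ks4} with $T=D_{\l_1,\l_2,\l_3}$, which gives $\Phi_{(\l_1,\l_2,\l_3)}(E_{11})=\tfrac12(\id+\l_3\s_3)$, $\Phi_{(\l_1,\l_2,\l_3)}(E_{22})=\tfrac12(\id-\l_3\s_3)$, $\Phi_{(\l_1,\l_2,\l_3)}(E_{12})=\tfrac12(\l_1\s_1+i\l_2\s_2)$, $\Phi_{(\l_1,\l_2,\l_3)}(E_{21})=\tfrac12(\l_1\s_1-i\l_2\s_2)$. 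Placing $\Phi_{(\l_1,\l_2,\l_3)}(E_{ij})$ into the $(i,j)$ block then produces
\[
C_{\Phi_{(\l_1,\l_2,\l_3)}}=\frac12\left(\begin{array}{cccc}
1+\l_3 & 0 & 0 & \l_1+\l_2\\
0 & 1-\l_3 & \l_1-\l_2 & 0\\
0 & \l_1-\l_2 & 1-\l_3 & 0\\
\l_1+\l_2 & 0 & 0 & 1+\l_3
\end{array}\right).
\]

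Next I would exploit the block pattern of this matrix. Reordering the tensor basis so that $\{e_1\o e_1,\,e_2\o e_2\}$ and $\{e_1\o e_2,\,e_2\o e_1\}$ are grouped together turns $C_{\Phi_{(\l_1,\l_2,\l_3)}}$ into the direct sum of the two $2\times2$ matrices $\tfrac12\left(\begin{smallmatrix}1+\l_3 & \l_1+\l_2\\ \l_1+\l_2 & 1+\l_3\end{smallmatrix}\right)$ and $\tfrac12\left(\begin{smallmatrix}1-\l_3 & \l_1-\l_2\\ \l_1-\l_2 & 1-\l_3\end{smallmatrix}\right)$. A real symmetric matrix of the shape $\left(\begin{smallmatrix}\a & \b\\ \b & \a\end{smallmatrix}\right)$ is positive semi-definite exactly when $\a\ge0$ and $\a^2\ge\b^2$; applying this to the two blocks shows that $C_{\Phi_{(\l_1,\l_2,\l_3)}}\ge0$, hence that $\Phi_{(\l_1,\l_2,\l_3)}$ is completely positive, if and only if $1+\l_3\ge0$, $1-\l_3\ge0$, $(\l_1+\l_2)^2\le(1+\l_3)^2$ and $(\l_1-\l_2)^2\le(1-\l_3)^2$; the last two are precisely \eqref{ks14} and \eqref{ks15}.

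It then remains to reconcile this list with the stated one, i.e. to explain why the linear conditions $1\pm\l_3\ge0$ may be traded for the single inequality \eqref{ks16}. On the one hand, $\Phi_{(\l_1,\l_2,\l_3)}$ is bistochastic, hence positive, so (as already noted from \eqref{ks5}) $|\l_k|\le1$ for $k=1,2,3$, and in particular $1\pm\l_3\ge0$ automatically. On the other hand, a straightforward expansion yields the factorization
\begin{align*}
&\big(1-(\l_1^2+\l_2^2+\l_3^2)\big)^2-4\big(\l_1^2\l_2^2+\l_2^2\l_3^2+\l_1^2\l_3^2-2\l_1\l_2\l_3\big)\\
&\qquad=\big[(1+\l_3)^2-(\l_1+\l_2)^2\big]\big[(1-\l_3)^2-(\l_1-\l_2)^2\big]\\
&\qquad=(1+\l_1+\l_2+\l_3)(1+\l_1-\l_2-\l_3)(1-\l_1+\l_2-\l_3)(1-\l_1-\l_2+\l_3),
\end{align*}
so \eqref{ks16} says exactly that the product of the two quantities appearing in \eqref{ks14} and \eqref{ks15} is non-negative. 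Consequently, granted \eqref{ks14} and \eqref{ks15}, inequality \eqref{ks16} is automatic; conversely, once $|\l_k|\le1$ (so $1\pm\l_3\ge0$), the inequalities \eqref{ks14} and \eqref{ks15} by themselves force both $2\times2$ blocks to be positive semi-definite, and \eqref{ks16} adds nothing further. Thus, within the class of bistochastic maps, the systems $\{1\pm\l_3\ge0,\ \eqref{ks14},\ \eqref{ks15}\}$ and $\{\eqref{ks14},\ \eqref{ks15},\ \eqref{ks16}\}$ are equivalent, completing the proof.

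The Choi-matrix computation, while mildly tedious, is entirely routine, and the block-diagonalisation makes the semidefiniteness test immediate. The one point calling for care is the last step: one must keep precise track of the hypothesis being used (bistochasticity, hence $|\l_k|\le1$) so that the symmetric ``clean'' conditions match the asymmetric-looking inequalities \eqref{ks14}--\eqref{ks16}, and one must recognise \eqref{ks16} as the product of the four affine functionals $1\pm\l_1\pm\l_2\pm\l_3$. Geometrically this says that a bistochastic $\Phi_{(\l_1,\l_2,\l_3)}$ is completely positive precisely when $(\l_1,\l_2,\l_3)$ lies in the tetrahedron with vertices $(1,1,1)$, $(1,-1,-1)$, $(-1,1,-1)$, $(-1,-1,1)$ --- the parameters of the identity map and of the three Pauli conjugations $x\mapsto\s_kx\s_k$.
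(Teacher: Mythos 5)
Your proposal is correct, and the honest point of comparison is that the paper contains \emph{no} proof of this statement at all: Theorem \ref{CP-m} is simply imported from Ruskai--Szarek--Werner \cite{RSW} (``In \cite{RSW} it has been given a characterization\dots''), so your Choi-matrix argument supplies a self-contained derivation that the paper omits. Your computation of $C_{\Phi}=\sum_{i,j}E_{ij}\o\Phi_{(\l_1,\l_2,\l_3)}(E_{ij})$ is right, the permutation grouping $\{e_1\o e_1,e_2\o e_2\}$ and $\{e_1\o e_2,e_2\o e_1\}$ does split it into the two symmetric $2\times2$ blocks you wrote, and the resulting conditions $1\pm\l_3\ge0$, $(\l_1+\l_2)^2\le(1+\l_3)^2$, $(\l_1-\l_2)^2\le(1-\l_3)^2$ are exactly the positivity of those blocks. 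I checked your factorization by expanding both sides: indeed
\begin{align*}
&\big(1-(\l_1^2+\l_2^2+\l_3^2)\big)^2-4\big(\l_1^2\l_2^2+\l_2^2\l_3^2+\l_1^2\l_3^2-2\l_1\l_2\l_3\big)\\
&\qquad=\big[(1+\l_3)^2-(\l_1+\l_2)^2\big]\big[(1-\l_3)^2-(\l_1-\l_2)^2\big],
\end{align*}
both equalling $1-2\sum\l_k^2+\sum\l_k^4-2\sum_{j<k}\l_j^2\l_k^2+8\l_1\l_2\l_3$. So you are correct that \eqref{ks16} is a consequence of \eqref{ks14}--\eqref{ks15} and carries no independent information, and also correct that the three displayed inequalities characterize complete positivity only under the standing hypothesis $|\l_k|\le1$: for instance $(\l_1,\l_2,\l_3)=(0,0,-2)$ satisfies \eqref{ks14}--\eqref{ks16} yet gives a Choi matrix with a negative block. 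Since the paper has already restricted to bistochastic (hence positive) maps, for which $|\l_k|\le1$ follows from \eqref{ks5}, this hypothesis is in force and your equivalence is legitimate; it is good that you made the dependence explicit rather than letting it pass silently. The closing identification of the CP region with the tetrahedron spanned by the parameters of the identity and the three Pauli conjugations $x\mapsto\s_kx\s_k$ is also correct. In short, what your route buys is an elementary, verifiable proof of a theorem the paper only cites, together with the (mildly embarrassing for the statement as printed) observation that one of its three inequalities is redundant.
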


Let us characterize KS operators of the form $\Phi_{(\lambda_1,\lambda_2,\lambda_3)}$.

Using simple calculation from \eqref{ks6} with $T=D_{\lambda_1,\lambda_2,\lambda_3}$ we obtain the following
\begin{eqnarray}\label{34-1}
A|w_2\overline{w}_3-w_3\overline{w}_2|^2&+&B|w_1\overline{w}_3-w_3\overline{w}_1|^2\nonumber \\[2mm]
&&+C|w_1\overline{w}_2-w_2\overline{w}_1|^2\leq \big(\a|w_1|^2+\b |w_2|^2+\g |w_3|^2\big)^2,
\end{eqnarray}
where $\wb=(w_1,w_2,w_3)\in \bc^3$ and
\begin{eqnarray}\label{abc}
&&\alpha=|1-\lambda_1^2|,  \ \ \ \beta=|1-\lambda_2^2|, \ \ \ \gamma=|1-\lambda_3^2|\\
\label{abc1}
&&A=|\lambda_1-\lambda_2\lambda_3|^2, \ \ \ B=|\lambda_2-\lambda_1\lambda_3|^2, \ \ \ C=|\lambda_3-\lambda_1\lambda_2|^2.
\end{eqnarray}

Due to the inequality $|2\Re(uv)|\leq |u|^2+|v|^2$, we have
\begin{eqnarray*}
|w_i\overline{w}_j-w_j\overline{w}_i|^2=|2\Re(w_iw_j)|^2\leq |w_i|^4+2|w_i|^2|w_j|^2+|w_j|^4 \ \ (i\neq j)
\end{eqnarray*}
hence, we estimate LHS of \eqref{34-1} by
\begin{eqnarray*}
A(|w_2|^4+2|w_2|^2|w_3|^2+|w_3|^4)+B(|w_1|^4+2|w_1|^2|w_3|^2+|w_3|^4)-C(|w_1|^4+2|w_1|^2|w_2|^2+|w_2|^4)
\end{eqnarray*}
Consequently, from \eqref{34-1} we derive the following one
\begin{eqnarray}
&&|w_1|^4(\alpha^2-B-C)+|w_2|^4(\beta^2-A-C)+|w_3|^4(\gamma^2-A-B)\nonumber \\
\label{ks12}
&&+2|w_1|^2|w_2|^2(\alpha\beta-C)+2|w_1|^2|w_3|^2(\alpha\gamma-B)+2|w_2|^2|w_3|^2(\beta\gamma-A)\geq0
\end{eqnarray}

It is easy to see that \eqref{ks12} is satisfied if one has
\begin{eqnarray*}
&&\a^2\geq B+C, \quad  \b^2\geq A+C, \quad  \g^2\geq A+B,\\
&&\a\b\geq C,  \quad  \a\g\geq B,   \quad  \b\g\geq A.
\end{eqnarray*}

Substituting above denotations \eqref{abc},\eqref{abc1} to the last inequalities, and doing simple
 calculation one derives

\begin{eqnarray}\label{ks17}
&&(1+\l_1^2)(3+\l_2^2+\l_3^2-\l_1^2)\leq 4(1+\l_1\l_2\l_3);\\ \label{ks18}
&&(1+\l_2^2)(3+\l_1^2+\l_3^2-\l_2^2)\leq 4(1+\l_1\l_2\l_3);\\ \label{ks19}
&&(1+\l_3^2)(3+\l_1^2+\l_2^2-\l_3^2)\leq 4(1+\l_1\l_2\l_3);\\ \label{ks20}
&&\l_1^2+\l_2^2+\l_3^2\leq 1+ 2\l_1\l_2\l_3.
\end{eqnarray}

Hence, we have the following

\begin{thm}\label{KS-m}
If \eqref{ks17},\eqref{ks18},\eqref{ks19} and \eqref{ks20} are satisfied, then a map $\Phi_{(\lambda_1,\lambda_2,\lambda_3)}$ is a KS-operator.
\end{thm}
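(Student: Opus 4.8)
The plan is to reverse-engineer the derivation already laid out in the excerpt, so the proof is essentially a matter of stitching together the implications that precede the statement. Concretely, I would argue as follows. Start from Theorem \ref{ks-d}, which characterizes bistochastic KS-operators by conditions \eqref{ks5}--\eqref{ks6}. For a map of the special form $\Phi_{(\lambda_1,\lambda_2,\lambda_3)}$, the matrix $T=D_{\lambda_1,\lambda_2,\lambda_3}$ is real and diagonal, so the second half of \eqref{ks5}, namely $T\overline{\wb}=\overline{T\wb}$, is automatic, and the first half $\|T\wb\|\le\|\wb\|$ reduces to $|\lambda_k|\le 1$ for $k=1,2,3$. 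Thus the only substantive requirement is \eqref{ks6}, which — after expanding the cross products componentwise with $T$ diagonal — becomes exactly inequality \eqref{34-1}, with the constants $\alpha,\beta,\gamma$ and $A,B,C$ given by \eqref{abc}--\eqref{abc1}. So the task is to exhibit conditions on $(\lambda_1,\lambda_2,\lambda_3)$ under which \eqref{34-1} holds for every $\wb\in\bc^3$.

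Next I would carry out the estimation step verbatim as in the excerpt: using $|w_i\overline w_j-w_j\overline w_i|^2\le |w_i|^4+2|w_i|^2|w_j|^2+|w_j|^4$, bound the left-hand side of \eqref{34-1} from above, square out the right-hand side, and rearrange everything to the single polynomial inequality \eqref{ks12} in the three nonnegative real variables $p=|w_1|^2$, $q=|w_2|^2$, $r=|w_3|^2$. This is a quadratic form in $(p,q,r)$ restricted to the positive octant; a clean sufficient condition for it to be nonnegative there is that \emph{every coefficient} be nonnegative, i.e. the six inequalities $\alpha^2\ge B+C$, $\beta^2\ge A+C$, $\gamma^2\ge A+B$, $\alpha\beta\ge C$, $\alpha\gamma\ge B$, $\beta\gamma\ge A$. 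Then I would substitute \eqref{abc}--\eqref{abc1} into these six inequalities. Here one uses $|1-\lambda_k^2|=1-\lambda_k^2$ once $|\lambda_k|\le 1$ (which is forced anyway by \eqref{ks5}), and similarly keeps track of the absolute values $|\lambda_i-\lambda_j\lambda_k|^2=(\lambda_i-\lambda_j\lambda_k)^2$ since these are real. A short algebraic simplification collapses the three "$\alpha^2\ge B+C$"-type inequalities into \eqref{ks17}--\eqref{ks19}, and the three "$\alpha\beta\ge C$"-type inequalities all reduce to the single condition \eqref{ks20} (one checks that $\alpha\beta-C$, $\alpha\gamma-B$, $\beta\gamma-A$ all equal a positive multiple of, or coincide with, $1+2\lambda_1\lambda_2\lambda_3-\lambda_1^2-\lambda_2^2-\lambda_3^2$ after expansion). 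Assembling: \eqref{ks17}--\eqref{ks20} $\Rightarrow$ the six coefficient inequalities $\Rightarrow$ \eqref{ks12} $\Rightarrow$ \eqref{34-1} $\Rightarrow$ \eqref{ks6}, and together with the already-noted $|\lambda_k|\le 1$ and the automatic $T\overline\wb=\overline{T\wb}$ this gives \eqref{ks5}--\eqref{ks6}, so by Theorem \ref{ks-d} the map $\Phi_{(\lambda_1,\lambda_2,\lambda_3)}$ is a KS-operator.

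The routine-but-delicate part is the bookkeeping in the substitution step: one must be careful that the inequality $|u|^2+|v|^2\ge 2\Re(uv)$ is being applied in the correct direction (it is used to pass from \eqref{34-1} to the stronger \eqref{ks12}, so the sufficiency, not equivalence, is what one gets — which is exactly what the theorem claims), and one must verify the sign conventions when dropping the absolute values around $1-\lambda_k^2$, which is legitimate precisely because \eqref{ks20} already forces $\lambda_1^2,\lambda_2^2,\lambda_3^2\le 1$. The only real "obstacle" is cosmetic: showing that the three multiplicative inequalities $\alpha\beta\ge C$, $\alpha\gamma\ge B$, $\beta\gamma\ge A$ genuinely all reduce to the \emph{same} inequality \eqref{ks20}, which requires expanding $(1-\lambda_1^2)(1-\lambda_2^2)-(\lambda_3-\lambda_1\lambda_2)^2$ and observing it equals $1+2\lambda_1\lambda_2\lambda_3-\lambda_1^2-\lambda_2^2-\lambda_3^2$, and symmetrically for the other two. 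Everything else is a direct chain of implications already visible in the text preceding the statement.
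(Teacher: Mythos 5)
Your route is the paper's route, almost line for line: reduce via Theorem \ref{ks-d} to condition \eqref{ks6}, rewrite it for diagonal $T$ as \eqref{34-1}, majorize the left-hand side using $|w_i\overline{w}_j-w_j\overline{w}_i|\le |w_i|^2+|w_j|^2$ to arrive at the quadratic form \eqref{ks12} in the variables $|w_1|^2,|w_2|^2,|w_3|^2$, demand that all six coefficients of that form be nonnegative, and translate those six inequalities into \eqref{ks17}--\eqref{ks20}. That chain is sound, the direction of the estimate is handled correctly (it gives sufficiency, which is all the theorem claims), and it is exactly what the text preceding the theorem does; incidentally, the paper's displayed upper bound for the left-hand side of \eqref{34-1} carries a typographical $-C$ where your version correctly has $+C$.

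The one step in your write-up that fails is the claim that \eqref{ks20} ``already forces $\lambda_1^2,\lambda_2^2,\lambda_3^2\le 1$.'' It does not: $\lambda_1=\lambda_2=\lambda_3=2$ satisfies \eqref{ks20} (namely $12\le 17$) and in fact all of \eqref{ks17}--\eqref{ks19} (namely $35\le 36$), yet $\Phi_{(2,2,2)}$ is not even positive, let alone a KS-operator. The bound $|\lambda_k|\le 1$ is needed twice in your argument --- once to obtain the first half of \eqref{ks5}, and once to drop the absolute values in \eqref{abc} so that \eqref{ks12} really implies \eqref{34-1} --- and it is not a consequence of the hypotheses \eqref{ks17}--\eqref{ks20}. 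The paper shares this gap (it records $|\lambda_k|\le 1$ only as a consequence of \eqref{ks5} for maps already known to be KS, and never re-imposes it in Theorem \ref{KS-m}), but your proposed repair via \eqref{ks20} is incorrect; the honest fix is to add $|\lambda_k|\le 1$, $k=1,2,3$, as an explicit standing hypothesis of the theorem.
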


The last theorem allows us to construct lots of KS-operators, which are not completely positive (see Example 2).

Let us consider mappings $\Phi_{(\l,\l,\m)}$, and for the check the conditions of Theorems \ref{CP-m} and
\ref{KS-m}.

Calculating \eqref{ks14}, \eqref{ks15}, \eqref{ks16}, \eqref{ks17},\eqref{ks18},\eqref{ks19} and \eqref{ks20} we obtain the following
\begin{eqnarray*}
\l^2\leq\frac{(1+\m)^2}{4}; \ \ \ \ \ \l^2\leq\frac{1+\m}{3-\m}; \ \ \ \ \ \l^2\leq\frac{(1+\m)^2}2; \ \ \ \ \ \l^2\leq\frac{1+\m}2.
\end{eqnarray*}

The graphics of above inequalities are the following
\begin{figure}[h]
\centering
\includegraphics[width=0.5\columnwidth,clip]{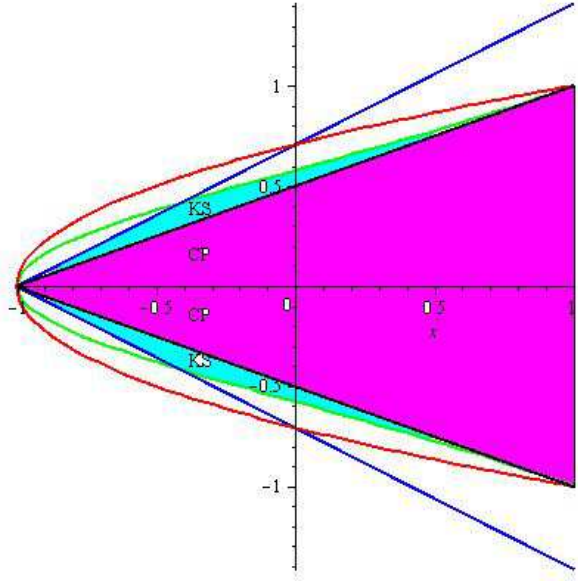}\\
\caption{Blue color indicates KS operators, which are not CP. Red color indicated CP maps.}
\end{figure}

From the graphic we see that class of KS-operators are much larger that the class of completely positive ones.

{\bf Example 2.} Now we are going to construct KS-operators, which is not complete positive.
Consider mappings of the form  $\Phi_{(\l,\l,\l)}$. Let us first check conditions of Theorem \ref{CP-m}, here as above
$|\l|\leq1$. From \eqref{ks14} we obtain the following inequality $$4\l^2\leq(1+\l)^2$$
Solving the last inequality one has $$(\l-1)(\l+\frac13)\leq0$$
If we take $\l$ such that $-1\leq\l<-\frac13$, then \eqref{ks14} is not satisfied. This means that $\Phi_{(\l,\l,\l)}$ is not complete positive.

Next we are going to check conditions of Theorem \ref{KS-m} From \eqref{ks17},\eqref{ks18},\eqref{ks19} and \eqref{ks20} one finds $$(1+\l^2)(3+\l^2)\leq4(1+\l^3).$$
Calculating the last one we obtain $$(\l-1)^2(\l-1-\sqrt2)(\l-1+\sqrt2)\leq0.$$
If $1-\sqrt2\leq\l\leq1$ then $\Phi_{(\l,\l,\l)}$ is KS-operator.

So, taking into account above we conclude that  if $1-\sqrt2\leq\l<-\frac13$, then $\Phi_{(\l,\l,\l)}$ is KS-operator,
 but not complete positive one.

\section*{Acknowledgement} This work is partially supported by
the Malaysian Ministry of Science, Technology and Innovation Grant
01-01-08-SF0079. A finial part of
this work was done at the Abdus Salam International
Center for Theoretical Physics (ICTP), Trieste, Italy. F.M. thanks
the ICTP for providing financial support of his visit (within the scheme of
Junior Associate) to ICTP.

\end{document}